\newtheorem{lem}{Lemma}
\newcommand{\eat}[1]{}
\DeclareMathOperator*{\argmin}{argmin}
\title{Symmetric Submodular Clustering with Actionable Constraint}
\author{
Amit Dhurandhar\\
Mathematical Sciences Dept.\\
IBM TJ Watson\\
Yorktown Heights, NY, USA \\
\texttt{adhuran@us.ibm.com} \\
\And
Karthik Gurumoorthy\thanks{This research work benefited from the support
of the AIRBUS Group Corporate Foundation Chair in Mathematics of Complex
Systems established in ICTS-TIFR.}\\
Intl. Center for Theoretical Sciences\\
Tata Inst. of Fundamental Research\\
Bangalore, Karnataka, India\\
\texttt{karthik.gurumoorthy@icts.res.in} \\
}
\begin{document}

\maketitle

\begin{abstract}
Clustering with submodular functions has been of interest over the last few years. Symmetric submodular functions are of particular interest as minimizing them is significantly more efficient and they include many commonly used functions in practice viz. graph cuts, mutual information. In this paper, we propose a novel constraint to make clustering actionable which is motivated by applications across multiple domains, and pose the problem of performing symmetric submodular clustering subject to this constraint. We see that obtaining a $k$ partition with approximation guarantees is a non-trivial task requiring further work.
\end{abstract}




\section{Introduction}
Given a dataset, finding homogeneous disjoint collections or clusters have applications in a wide variety of domains \cite{charubook,clustbook}. However, in many domains actions cannot be taken at an individual instance level but can only be taken based on homogeneity at predefined aggregate levels, which we refer to as groups. In these domains, unsupervised clustering may not provide actionable insight as instances from all the groups may be spread across multiple clusters with no cluster containing a sizable number of instances from any of the groups.

We thus define a new notion for a clustering to be useful called \emph{actionable clustering}. This notion is motivated by applications across multiple industries and domains. For example, many big businesses want to cluster their spend data \cite{icebe11} to find out areas of high/low spend and high/low non-compliance to be able to take the appropriate action. The appropriate action may be remedial in nature or they may want to reward certain entities. Given this, it is impractical for a business to take the appropriate action at the transaction level rather they may be able to put in place policies and processes at higher levels that respect their organizational structure such as the category (viz. marketing, human resources, IT, etc.) level. They would hope that the clustering will point towards one or more categories that they ought to target. This would require a high percentage of transactions corresponding to at least one category to lie in a single cluster. If the clustering is conducted in unsupervised fashion the transactions corresponding to the different categories may be spread across the various clusters rendering the clustering useless. However, it might have been the case that for a not much worse clustering\footnote{This is relative to the objective we are minimizing/maximizing.} most of the transactions in marketing would have landed in the same cluster, which would have made the clustering actionable and thus useful.
Note that the actionable entity may not just be based on a single attribute such as category but could be a combination of attributes such as category and business unit. A similar need can be seen in education where governing states might want to identify a school(s) under their jurisdiction that have been performing subpar/above expectation based on their students test scores, academic honors, athletic achievements, etc. in order to decide their funding levels. In this case too unsupervised clustering might not provide actionable clusters where most of the students in a particular school belong to a specific cluster, thus precluding the possibility of finding a consistent pattern at the school level and making the clustering unusable. Many such examples are seen in other domains viz. health-care and public policy, where decisions can be made only at a certain higher level of granularity and we want our clustering to respect this fact, even at the expense of obtaining a slightly worse clustering from the mathematical standpoint. From the above examples we can see that our definition for a clustering to be actionable is that there should exist at least one cluster which has a significant fraction $t$ of instances belonging to one of the groups we are interested in. In the above examples a group may correspond to a particular category or a specific school. Note that our constraint does not eliminate the possibility of having multiple groups being well represented in the clustering, rather it ensures that at least one such group is represented well enough.

The above exposition does not imply that the clusters should be homogeneous with respect to (w.r.t.) the groups as in supervised clustering \cite{super,thorsten}, where the groups could be considered as proxies for class labels, but rather a large fraction of instances belonging to some group should be present in some cluster. That cluster may very well have instances belonging to other groups. Moreover, a clustering which is excellent from the supervised perspective may not be feasible relative to our constraint, as each cluster may be homogeneous and contain only a single group but no cluster may contain at least $t$ fraction of the instances from any specific group.

Our definition of usefulness cannot be effectively captured in the constraint based or label based semi-supervised clustering frameworks \cite{wagstaff,semigrira,semibasu} either. The reason being that we do not know which $t$ fraction of the instances belonging to a group should be assigned to some cluster, so as to obtain an excellent clustering. We can of course randomly choose these instances and then perform semi-supervised clustering. However, we might have missed a different set of instances which if we had chosen as the $t$ fraction, would have resulted in a much better clustering.

With this new notion we do \emph{not} in any way imply that it covers all possibilities for a clustering to be useful but rather that it can lead to actionable clusterings in many applications. Moreover, none of the current frameworks or algorithms can effectively model our notion, which fosters the need for the design of new techniques. Note that variants such as performing clustering independently on each group would still require applying our constraint to make it actionable and would be more restrictive than our approach since, one would not be able to decipher if multiple groups can be well represented in the same cluster. This is important information as the corresponding organization can then design a better action plan based on the knowledge that these groups are similar.

We study the application of this constraint to symmetric submodular clustering. We do this since, symmetric submodular functions cover an interesting array of functions viz. graph cuts, mutual information, etc. commonly used in practice and (unconstrained) clustering with this class of functions has efficient polynomial time algorithms that provide globally optimal solutions for $k=2$ clusters and a better than 2 approximation guarantee for larger $k$ \cite{quey,queyext}. We explore the possibility of obtaining similar qualitative results in our constrained setting.

\section{Problem Statement}
Let $D$ denote the dataset of size $N$ we want to partition into $k$ clusters $\mathcal{C}=\{C_1,...,C_k\}$. Let $(f,D)$ represent a symmetric submodular system. Let $G=\{g_1,...,g_m\}$ denote the partition of $D$ into $m$ groups, with $g_s$ representing the smallest group. Let $|.|$ denote cardinality and $\lceil . \rceil$ denote ceiling. Given this and with $t\in [0,1]$ we want to solve the following problem:
\begin{equation}
\label{gobj}
\begin{split}
&\text{Actionable Symmetric Submodular Clustering}\\\\
&\argmin_{\mathcal{C}} \sum_{i=1}^{k} f(C_i) \text{ subject to: } \exists g\in G~~\exists C_i\in \mathcal{C}~~\text{such that}~~~ \frac{|C_i\bigcap g|}{|g|} \ge t
\end{split}
\end{equation}
where, $k\in \{1,...,N-\lceil t|g_s|\rceil+1\}$. For $k>N-\lceil t|g_s|\rceil+1$, it is impossible to create a $k$ partition and at the same time satisfy our constraint. In practice though, we usually desire only a few clusters for a concise interpretation of our data and thus feasibility is unlikely to be an issue.

\section{(Possible) Solutions}
In this section we explore two different strategies based on the unconstrained setting and some recent work \cite{heredit} with the hope of obtaining an algorithm that runs in polynomial time and has similar approximation guarantees for any feasible $k$.

Both the strategies arise from the fact that our constraint can be expressed as a hereditary family. A hereditary family $\mathcal{I}$ over $D$ is a collection of subsets of $D$ such that if a set is in the family, so are all of its subsets. In other words, $\mathcal{I}$ is \emph{closed} under inclusion. Hence, in our case we want to minimize the objective over all sets $S$, with $S\subseteq D$ and there existing a $g\in G$ such that $\frac{|S\bigcap g|}{|g|} \leq 1-t$. More succintly, our constraint corresponds to optimizing over the following hereditary family,
\begin{equation}
\mathcal{I} = \left\{ S \subseteq D: \exists g\in G \text{ with } \frac{|S\bigcap g|}{|g|} \leq 1-t \right\}.
\end{equation}

It is straightforward to verify that $\mathcal{I}$ respects the condition of an hereditary family since, for any feasible set $S$ defined as above, all its subsets also lead to feasible solutions.

\subsection{Strategy 1: Sequential Splitting}
Here we explore the first strategy which involves successively greedily splitting the intermediate partitions until we have a $k$ partition. Before we explore this alternative it is important to realize that the aforesaid problem can be equivalently reduced into a constraint problem on an \emph{a priori decided group} as follows. For every group $g_j$ set $\tilde{g}_{1j} = g_j$ and $\tilde{g}_{2j} = \bigcup\limits_{i \not= j} g_i$. Then the solution to
\begin{align}
\label{eq:twogroupproblem}
\argmin_{j} \left[\argmin_{\mathcal{C}} \sum_{i=1}^{k} f(C_i) \hspace{10pt}\text{subject to:}\hspace{10pt}\exists C_i\in \mathcal{C}~~\text{such that}~~~ \frac{|C_i\bigcap \tilde{g}_{1j}|}{\left|\tilde{g}_{1j}\right|} \ge t \right]
\end{align}
can be shown to correspond to the optimal actionable clustering partition for the general multi-group problem. This reduction allows us to focus only on the actionable clustering constraint mandated on a chosen group which when iterated---by choosing a different group $g_j$ every time---solves the multi-group case.

\subsubsection{Optimal solution for $k=2$}

By defining an appropriate hereditary family $\mathcal{I}$ as above we realize that the actionable requirement can be reformulated as an hereditary constraint and hence the first algorithm developed in \cite{heredit} provides the optimal solution for actionable clustering when $k=2$. This work handles the $k=2$ case and finds a non-trivial set $S \in  \mathcal{I}$ that minimizes $f$ using $O(N^3)$ function value oracle calls over a hereditary family. 

Let $S^{\ast} = \argmin\limits_{S \in \mathcal{I}} f(S)$ be the solution procured by invoking the method in \cite{heredit}. It then follows that the set $D-S^{\ast}$ satisfies the actionable clustering constraint for $g$ and furthermore, $f(D-S^{\ast}) = f(S^{\ast})$ as $f$ is symmetric. The partition $\{S^{\ast}, D-S^{\ast} \}$ is the optimal clustering solution.

\subsubsection{Extension to arbitrary $k$}
The natural follow-up question is whether we can design a fast optimal algorithm for an arbitrary $k \geq 2$ scenario. Before we move forward on this front it is worth reflecting back on the progress made in the relatively easier unconstrained setting. Though polynomial time algorithm exists for optimal multi-way partition for a symmetric sub-modular function \cite{Queyranne99}, the solution time grows very rapidly with $k$. Hence faster approximate algorithms are sought. The works in \cite{Queyranne99} and \cite{queyext} provide a fast $2-\frac{2}{k}$ approximation algorithm using $O(kN^3)$ function value oracle calls. The proof technique developed in the latter is easy to comprehend and is based on a greedy splitting algorithm (GSA) avoiding complicated structures like cut trees or principal partition used in \cite{Queyranne99}. GSA works iteratively whereby it constructs a partition $\mathcal{P}_{i+1}$ from $\mathcal{P}_i$ by first optimally splitting each $W \in \mathcal{P}_i$ into non-trivial sets $S$ and $W-S$ and then choosing that set $W_i$ with the least cost $f(S_i)+f(W_i-S_i)-f(W_i)$ as the candidate. The partition $\mathcal{P}_{i+1}$ is set to $(\mathcal{P}_{i}-\{W_i\})\bigcup \{S_i,W_i-S_i\}$. The reader may refer to \cite{queyext} for details.

Since GSA functions greedily where at each stage it only requires optimal partitioning of a set into \emph{two} subsets which is provably optimal even under the hereditary family constraints, can the GSA strategy be used in conjunction with the algorithm developed in \cite{heredit} to devise a fast algorithm with a similar $2-\frac{2}{k}$ approximation guarantee? Contrary to our expectation we explain with a counterexample that a straightforward implementation of GSA will \emph{not} fetch us the desired result.

\subsubsection{GSA  for Actionable Clustering}
We will fix some more notations before we describing the details of the algorithm. For a set $W \subseteq D$, let $|g(W) = g \bigcap W|$ denote the number of instances of the group $g$---for which the clustering output needs to be actionable---in $W$ and let $\mathcal{I}(W)$ denote its hereditary family defined as $S \subseteq W \in \mathcal{I}(W) \text{ \emph{iff} } |g(S)| \leq |g(W)|-\lceil t |g(D)| \rceil$. Also, let $c(W)$ denote the increment in cost incurred when $W$ is optimally split into non-trivial subsets $S$ and $W-S$, i.e, $c(W) = f(W-S) + f(S) - f(W)$. The algorithm employed to divide $W$ is contingent on a binary group constraint tag $\beta(W)$ that we attach to each set $W$. If $\beta(W)=0$ corresponding to the unconstrained scenario, $W$ is partitioned by invoking the method in \cite{quey}. When $\beta(W)=1$ relating to the constrained setting, the hereditary family algorithm in \cite{heredit} is used to split $W$ where $S$ is enforced to belong in $\mathcal{I}(W)$. Akin to GSA, the algorithm progresses iteratively with $W=D$ and $\beta(W) = 1$. The algorithm is spelled out in Table~\ref{tab:algorithm}. Observe that in every iteration $\beta(W)=1$ for \emph{exactly one} set $W \in \mathcal{P}_i$ as it suffices if one cluster satisfies the cardinality constraint for the clustering to be actionable for the group $g$. Following the analysis in \cite{queyext} and \cite{heredit} the running time of the proposed algorithm can also be shown to be $O(kN^3)$ function value oracle calls. But, can the algorithm provide any performance guarantee like the GSA? The answer turns out to be \emph{no} as elucidated below.
\begin{table}
\begin{centering}
\caption{GSA for actionable clustering}
\label{tab:algorithm}
\par\end{centering}
\centering{}
\begin{tabular}{|c|l|}
\hline 
1 & $\mathcal{P}_1 \leftarrow \{D\}$, $\beta(D) = 1$ \tabularnewline
2  & \textbf{for} $i = 1,\ldots,k-1$ \textbf{do}\tabularnewline
3  &\hspace{10pt}\textbf{for} each $W \in \mathcal{P}_i$ \textbf{do}\tabularnewline
4  &\hspace{20pt}\textbf{if} $\beta(W)=0$ \textbf{then}\tabularnewline
5  &\hspace{30pt}Partition $W$ into $S$ and $W-S$ by Queyranne's algorithm \cite{quey}. \tabularnewline
6  &\hspace{30pt}Set $\beta(S)=0$, $\beta(W-S)=0$ and $c(W) =  f(W-S) + f(S) - f(W)$.  \tabularnewline
7  &\hspace{20pt}\textbf{else} \tabularnewline
8  &\hspace{30pt}\textbf{if} $|W| = \lceil tN \rceil$ \textbf{then} set $c(W) = \infty$. \tabularnewline
9  &\hspace{30pt}\textbf{else} \tabularnewline
10  &\hspace{40pt}Define $\mathcal{I}(W)$ and partition $W$ into $S$ and $W-S$ by the hereditary\tabularnewline
& \hspace{40pt}family algorithm \cite{heredit} such that $S \in \mathcal{I}(W)$. \tabularnewline
11  &\hspace{40pt}Set $\beta(S)=0$, $\beta(W-S)=1$ and $c(W) =  f(W-S) + f(S) - f(W)$.  \tabularnewline
12  &\hspace{30pt}\textbf{end} /* end if */\tabularnewline
13  &\hspace{20pt}\textbf{end}  /* end if */\tabularnewline
14  &\hspace{10pt}\textbf{end}  /* end for */\tabularnewline
15  &\hspace{10pt}$(S_i,W_i) \leftarrow \argmin\limits_{W \in \mathcal{P}_i} c(W)$.\tabularnewline
16  &\hspace{10pt}$\mathcal{P}_{i+1}\leftarrow(\mathcal{P}_{i}-\{W_i\})\bigcup \{S_i,W_i-S_i\}$.\tabularnewline
17  &\textbf{end}  /* end for */\tabularnewline
\hline
\end{tabular}
\end{table}

\subsubsection{Counter Example using Graph Cuts}
Consider a weighted undirected graph $G = (V,E)$ comprising of a set $V$ of vertices together with a set $E = \{(u,v): u,v \in V\}$ of edges. Let $w: E \rightarrow \mathbb{R}^+$ be a positive weight function on the edges. Define a cut function $f: 2^V \rightarrow \mathbb{R}^+$ by $f(S \subseteq V) \equiv \sum\limits_{ u \in S,v \in V-S} w((u,v))$. Given a $t \in (0,1)$, the problem of actionable clustering is to partition $V$ into $k$ non-trivial clusters $\mathcal{C}^{\ast} = \{C_1^{\ast},\ldots,C_k^{\ast}\}$ where
\begin{equation*}
\mathcal{C}^{\ast}= \argmin\limits_{\mathcal{C}} \sum_{i=1}^{k} f(C_i) \text{ subject to: } \exists C_i\in \mathcal{C}~~\text{with}~~~|C_i| \ge t |V|.
\end{equation*}
It is well known that  the cut function $f$ is both symmetric and submodular \cite{Nagamochi92,Nagamochi94,quey}. Here we have only one group $g=V$. 

To construct the required counterexample we let the graph $G$ to be a \emph{disjoint union} of a complete graph $G_1 = (V_1,E_1)$ and a tree $G_2 = (V_2,E_2)$ with $|V_1| = (1-t)|V|$ and $|V_2| = t|V|$. For an arbitrary small $\epsilon > 0$ let the edges in $E_1$ and $E_2$ be weighted $1/\epsilon$ and $\epsilon$ respectively. To make the problem more concrete set $|V| = 100, t = 0.51$ and $k=10$. Further, let $a$ and $b$ be two nodes in $G_2$ of degree $1$.

By construction, the first iteration the algorithm in Table~\ref{tab:algorithm} will partition $V = (A, B)$ where $A = V_1$ and $B=V_2$. The first split cost $f(A)+f(B) = 0$ and that $A \in \mathcal{I}(V)$ and $B$ upholds the cardinality constraint as $|B| = 51 = t|V|$. Any further division of the set $B$ will result in an infeasible solution as no cluster would be large enough to be actionable. So the algorithm would repeatedly split $A$ into $k-1 = 9$ clusters, namely $\{A_1, A_2,\ldots,A_9\}$ and if we let $A_{k} = B$, the total cost would amount to $\sum_{i=}^{k} A_i$. As $G_1$ is a complete graph and the edges in $E_1$ are heavily weighted by design, even the optimal $k-1$ partition of A---not necessarily the clusters outputted by our algorithm---is expensive. However, the more efficient and perhaps the optimal solution will be to club $V_1$ with nodes $\{a,b\}$ and set $C_1 = V_1 \bigcup \{a,b\}$ making it just large enough to be actionable ($|C_1| = 51$) and then optimally partition $V_2-\{a,b\}$ into $9$ clusters $C_2,\ldots,C_{10}$. Then for any $\alpha >0$ we can choose an $\epsilon$ such that $\sum_{i=1}^{k}f(A_i) > \alpha \sum_{i=1}^{k}f(C_i)$ as the former clusters are obtained by cutting through costly edges of weight $1/\epsilon$ and the latter is a resultant of subdividing across inexpensive edges of mass $\epsilon$. Hence the proposed algorithm cannot promise any performance guarantees. The sequential splitting of $V$ cannot be greedily performed as in GSA.

\subsection{Strategy 2: Parallel Splitting}
An alternative to sequentially splitting is to simultaneously split and in a single shot obtain a $k$ partition. Algorithm 2 in \cite{heredit} offers this possibility as it outputs all minimal disjoint optimal solutions under the hereditary family constraint\footnote{The argument for why all minimal optimal solutions are disjoint is given in the cited paper.}. The obvious issue with this approach is that the union of $k-1$ of these sets may not provide a feasible $k$ partition. Moreover, we may not get $k-1$ solutions but much fewer in which case we would not have a $k$ partition. A possibility is to relax our constraint and say we want $\le k$ clusters. However, since our function $f$ is submodular the optimal solution would then be not splitting at all i.e. the original dataset $D$. We could always say that we want an $m$ partition where $m\le k$ and is as close as possible to $k$ at the expense of the definition sounding a little artificial.

Nonetheless, if we do obtain a feasible $k$ (or $m\le k$) partition by this approach it can be shown to satisfy the better than 2 approximation guarantee.
\begin{lem}
Given a feasible $k$ partition $\{C_1,...,C_k\}$ obtained using algorithm 2 in \cite{heredit} and the optimal (feasible) $k$ partition given by $\{C^{\ast}_1,...,C^{\ast}_k\}$ we have \[\sum_{i=1}^{k}f(C_i) \le 2\left(1-\frac{1}{k}\right)\sum_{i=1}^{k}f(C^{\ast}_i).\]
\end{lem}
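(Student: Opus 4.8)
The plan is to funnel everything through the single scalar $\lambda := \min_{\emptyset \neq S \in \mathcal{I}} f(S)$, the optimal cost over the hereditary family, since this is exactly the value that every minimal optimal solution returned by Algorithm 2 attains. The strategy is to sandwich both quantities between multiples of $\lambda$: upper-bound $\sum_i f(C_i)$ by $2(k-1)\lambda$ and lower-bound $\sum_i f(C_i^{\ast})$ by $k\lambda$, after which the stated ratio drops out immediately.

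First I would fix the structure of the partition. Algorithm 2 returns disjoint minimal optimal sets each of cost $\lambda$, so I would write the feasible $k$-partition as $C_1,\ldots,C_{k-1}$ (each with $f(C_i)=\lambda$) together with the remainder $C_k = D - \bigcup_{i<k} C_i$; this yields $\sum_{i=1}^{k-1} f(C_i) = (k-1)\lambda$ for free. Next I would control the remainder using the two defining properties of a symmetric submodular system: by symmetry $f(C_k) = f(\bigcup_{i<k} C_i)$, and because the $C_i$ are pairwise disjoint with $f(\emptyset)=0$, submodularity telescopes into subadditivity, $f(\bigcup_{i<k} C_i) \le \sum_{i<k} f(C_i) = (k-1)\lambda$. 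Hence $\sum_{i=1}^{k} f(C_i) \le 2(k-1)\lambda$, which is where the factor of $2$ originates.

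The hard part will be the lower bound $\sum_{i=1}^k f(C_i^{\ast}) \ge k\lambda$, i.e.\ that \emph{every} optimal part costs at least $\lambda$. By feasibility some part, say $C_1^{\ast}$, is actionable for a group $g$, so $|C_1^{\ast}\cap g| \ge t|g|$; then for $i\ge 2$ disjointness forces $|C_i^{\ast}\cap g| \le (1-t)|g|$, placing $C_i^{\ast}\in\mathcal{I}$ and giving $f(C_i^{\ast})\ge\lambda$. The delicate point is that the actionable part $C_1^{\ast}$ itself need \emph{not} lie in $\mathcal{I}$ (it can be ``neither'' when $t>\tfrac12$), so its lower bound must be routed through its complement: $D - C_1^{\ast}\in\mathcal{I}$, whence symmetry gives $f(C_1^{\ast}) = f(D - C_1^{\ast}) \ge \lambda$. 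I expect verifying this case split, and in particular that a single actionable part forces all remaining parts into $\mathcal{I}$ through the same $g$, to be the only step needing care.

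Combining the two bounds, $\sum_{i=1}^k f(C_i) \le 2(k-1)\lambda = 2\,\tfrac{k-1}{k}\,(k\lambda) \le 2(1-\tfrac1k)\sum_{i=1}^k f(C_i^{\ast})$, which is the claim. To close, I would confirm the normalization $f(\emptyset)=0$ used in the subadditivity step (valid for graph cut and mutual-information systems), check that all clusters are nonempty so that $f(\cdot)\ge\lambda$ is legitimate, and note as a sanity check that $k=2$ reduces to $\sum_i f(C_i) = 2\lambda = \sum_i f(C_i^{\ast})$, recovering the exact optimality already established for the two-cluster case.
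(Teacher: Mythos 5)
Your proof is correct and follows essentially the same route as the paper's: symmetry turns $f(C_k)$ into $f\bigl(\bigcup_{i<k}C_i\bigr)$, submodularity supplies the factor $2$, and comparison with the optimal parts plus an averaging step yields the $\left(1-\frac{1}{k}\right)$ factor. The only difference is bookkeeping---you route everything through $\lambda=\min_{S\in\mathcal{I}}f(S)$ and explicitly verify $f(C_i^{\ast})\ge\lambda$ for every optimal part (handling the actionable part via its complement), a step the paper compresses into the remark that the $C_i$ are optimal binary partitions.
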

\begin{proof}
Without loss of generality assume $C_i$ $\forall i\in\{1,...,k-1\}$ are the minimal optimal solutions and that $f(C^{\ast}_k)=\max\limits_{i\in\{1,...,k\}} f(C^{\ast}_i)$ then,
\begin{align*}
\sum_{i=1}^{k}f(C_i) &= \sum_{i=1}^{k-1}f(C_i)+ f\left(\bigcup\limits_{i=1}^{k-1}C_i\right) \\
&\le 2\sum_{i=1}^{k-1}f(C_i) \le 2\sum_{i=1}^{k-1}f(C^{\ast}_i) \le 2\left(1-\frac{1}{k}\right)\sum_{i=1}^{k}f(C^{\ast}_i).
\end{align*}
\end{proof}
The first equality is due to symmetry. The first inequality is because of submodularity. The second inequality is because the $C_i$ $\forall i\in\{1,...,k-1\}$ are optimal binary partitions. The final inequality is because $f(C^{\ast}_k)$ has the highest value amongst the $C^{\ast}_i$ and hence, $f(C^{\ast}_k)\ge \frac{1}{k}\sum_{i=1}^{k}f(C^{\ast}_i)$.

\section{Discussion}
In this paper, we saw that sequentially splitting until we obtain a k-partition by extending GSA or simultaneously obtaining minimal disjoint solutions may not lead to a $k$-partition with approximation guarantees. Although for $k=2$ we have a globally optimal solution that runs in polynomial time and if relax our requirement to obtaining $\le k$ clusters we are able to get a solution with similar approximation guarantees as in the unconstrained setting, getting an exact $k$-partition with approximation guarantees is still an open question.

Less greedy strategies than GSA that take into account $k$ and split accordingly so as to keep the option of splitting any cluster viable until we have a $k$-partition could be the answer. However, doing this in a way that each splitting step can be bounded is the main challenge. Following the approach in \cite{Chekuri11}, another possibility could be to relax the problem using Lovasz extension and obtain a bounded solution using an intelligent rounding mechanism.

\bibliographystyle{abbrv}
\bibliography{subclust} 

\begin{thebibliography}{10}

\bibitem{charubook}
C.~Aggarwal and C.~Reddy.
\newblock {\em Data clustering: {A}lgorithms and applications}.
\newblock CRC Press, 2013.

\bibitem{semibasu}
S.~Basu, M.~Bilenko, and R.~Mooney.
\newblock A probabilistic framework for semi-supervised clustering.
\newblock In {\em Proceedings of the $10^{th}$ ACM SIGKDD International
  Conference on Knowledge Discovery and Data Mining}, pages 59--68, 2004.

\bibitem{Chekuri11}
C.~Chekuri and A.~Ene.
\newblock Approximation algorithms for submodular multiway partition.
\newblock In {\em Proceedings of the $52^{nd}$ IEEE Annual Symposium on
  Foundations of Computer Science}, pages 807--816, 2011.

\bibitem{icebe11}
P.~Chowdhary, M.~Ettl, A.~Dhurandhar, S.~Ghosh, G.~Maniachari, B.~Graves,
  B.~Schaefer, and Y.~Tang.
\newblock Identify and manage procurement savings using advanced compliance
  analytics.
\newblock In {\em Proceedings of the IEEE International Conference on
  e-Business Engineering}, 2011.

\bibitem{super}
C.~Eick, N.~Zeidat, and Z.~Zhao.
\newblock Supervised clustering algorithms and benefits.
\newblock In {\em Proceedings of the $16^{th}$ IEEE International Conference on
  Tools with Artificial Intelligence}, pages 774--776, 2004.

\bibitem{clustbook}
B.~Everitt, S.~Landau, and M.~Leese.
\newblock {\em Cluster Analysis}.
\newblock CRC Press, $4^{th}$ edition, 2013.

\bibitem{thorsten}
T.~Finley and T.~Joachims.
\newblock Supervised clustering with support vector machines.
\newblock In {\em Proceedings of the $22^{nd}$ International Conference on
  Machine Learning}, pages 217--224, 2005.

\bibitem{heredit}
M.~X. Goemans. and J.~A. Soto.
\newblock Algorithms for symmetric submodular function minimization under
  hereditary constraints and generalizations.
\newblock {\em SIAM Journal on Discrete Mathematics}, 27(2):1123--1145, 2013.

\bibitem{semigrira}
N.~Grira, M.~Crucianu, and N.~Boujemaa.
\newblock Unsupervised and semi-supervised clustering: {A} brief survey.
\newblock In {\em A Review of Machine Learning Techniques for Processing
  Multimedia Content, Report of the MUSCLE European Network of Excellence},
  2004.

\bibitem{Nagamochi92}
H.~Nagamochi and T.~Ibaraki.
\newblock Computing edge connectivity in multigraphs and capacitated graphs.
\newblock {\em SIAM Journal on Discrete Mathematics}, 5(1):54--66, 1992.

\bibitem{Nagamochi94}
H.~Nagamochi, T.~Ono, and T.~Ibaraki.
\newblock Implementing an efficient minimum capacity cut algorithm.
\newblock {\em Mathematical Programming}, 67(1-3):325--341, 1994.

\bibitem{quey}
M.~Queyranne.
\newblock Minimizing symmetric submodular functions.
\newblock {\em Mathematical Programming}, 82(1-2):3--12, 1998.

\bibitem{Queyranne99}
M.~Queyranne.
\newblock On optimum size-constrained set partitions.
\newblock In {\em $3^{rd}$ Combinatorial Optimization Workshop, AUSSOIS}, 1999.

\bibitem{wagstaff}
K.~Wagstaff, C.~Cardie, S.~Rogers, and S.~Schroedl.
\newblock Constrained {K}-means clustering with background knowledge.
\newblock In {\em Proceedings of the $18^{th}$ International Conference on
  Machine Learning}, pages 577--584, 2001.

\bibitem{queyext}
L.~Zhao, H.~Nagamochi, and T.~Ibaraki.
\newblock Greedy splitting algorithms for approximating multiway partition
  problems.
\newblock {\em Mathematical Programming}, 102(1):167--183, 2005.

\end{thebibliography}
\end{document}